\newtheorem{theorem}{Theorem}
\newtheorem{lemma}{Lemma}
\theoremstyle{remark}
\theoremstyle{definition}
\begin{document}
\title{Exact Algorithms for No-Rainbow Coloring and Phylogenetic Decisiveness}
\author[1]{Ghazaleh Parvini}
\author[2]{David Fern\'{a}ndez-Baca}
\affil[1]{College of Information \& Computer Sciences,  University of Massachusetts Amherst, Amherst, MA 01003, USA}
\affil[2]{Department of Computer Science, Iowa State University, Ames, IA 50011, USA}
\date{\empty}

\maketitle

\begin{abstract}
The input to the no-rainbow hypergraph coloring problem is a hypergraph $H$ where every hyperedge has $r$ nodes. The question is whether there exists an $r$-coloring of the nodes of $H$ such that all $r$ colors are used and there is no rainbow hyperedge --- i.e., no hyperedge uses all $r$ colors.  The no-rainbow hypergraph $r$-coloring problem is known to be NP-complete for $r \geq 3$.  The special case of $r=4$ is the complement of the phylogenetic decisiveness problem. 
Here we present a deterministic algorithm that solves the no-rainbow $r$-coloring problem in  $O^*((r-1)^{(r-1)n/r})$ time and a randomized algorithm that solves the problem in $O^*((\frac{r}{2})^n)$ time.
\end{abstract}

\section{Introduction}

The hypergraph no-rainbow $r$-coloring problem asks whether there is an assignment to one of $r$ colors to each of the nodes in a hypergraph such that all $r$ colors are used, but no edge has $r$ nodes with distinct colors.  The problem is perhaps the most well known in the class of ``surjective constraint satisfaction problems'' \cite{bodirsky2012complexity}. It was recently shown to be NP-complete \cite{ZhukNoRainbow2020}, even for $r=3$.  

The case $r =4$ is of interest in building phylogenetic trees from incomplete data.  References  \cite{SteelSanderson2010,SandersonTerracesScience2011} formulate the question of determining whether such data suffices to reconstruct a unique tree as a combinatorial problem, termed ``decisiveness.''  The decisiveness problem is equivalent to the complement of the no-rainbow $4$-coloring problem \cite{ParviniBraughtFBISBRA2020}.  
 
In \cite{ParviniBraughtFBISBRA2020} we presented exact algorithms for no-rainbow $r$-coloring for $r =3$ and $r =4$, that run in time $O^*(1.89^n)$ and $O^*(2.81^n)$ time, respectively\footnote{The $O^*$-notation is a variant of $O$-notation that ignores polynomial factors \cite{FominKratsch2010}.}. Here we present more efficient randomized and deterministic algorithms based on local search. 
We note that local search has been applied earlier to obtain exact algorithms for the $k$-SAT problem \cite{Schoening1999,MoserScheder2011} (see also \cite{SchoeningToran2013}).

\section{Preliminaries \label{sec:prelims}}

We write $[r]$ to denote the set $\{1,2, \dots, r\}$, for some positive integer $r$.

A \emph{hypergraph} 
$H$ is a pair  $H = (X,E)$, where $X$ is a set of elements called \emph{nodes} or \emph{vertices}, and 
$E$ is a set of non-empty subsets of $X$ called \emph{hyperedges} or \emph{edges} \cite{BergeHypergraphs1984}.  Two nodes $u, v \in V$ are \emph{neighbors} if $\{u,v\} \subseteq e$, for some $e \in E$.   A hypergraph $H = (X,E)$ is \emph{$r$-uniform}, for some integer $r > 0$, if each hyperedge of $H$ contains exactly $r$ nodes. 


Let $H = (X,E)$ be a hypergraph and $r$ be a positive integer. An \emph{$r$-coloring} of $H$ is a mapping $c : X \rightarrow [r]$.  For node $v \in X$, $c(v)$ is the \emph{color} of $v$.  Throughout this paper, $r$-colorings are assumed to be \emph{surjective}; that is, for each $i \in [r]$, there is at least one node $v \in X$ such that $c(v) = i$. The \emph{Hamming distance} between two colorings $c$ and $c'$ of $H$, denoted $d(c,c')$, equals the number of nodes $v \in X$ such that $c(v) \neq c'(v)$.  
Two colorings $c$ and $c'$ of $H$ \emph{agree} on set $F \subseteq X$ if $c(v) = c'(v)$ for all $v \in F$. 

Let $c$ be an $r$-coloring of $H$.  Edge $e \in E$ is a \emph{rainbow edge with respect to $c$} if, for each $i \in [r]$, there is at least one $v \in e$ such that $c(v) = i$.  Coloring $c$ \emph{induces a rainbow edge} if there is an edge $e\in E$ such that  $e$ is a rainbow edge with respect to $c$.
A \emph{no-rainbow $r$-coloring} of $H$ is a surjective $r$-coloring $c$ of $H$ such that $c$ induces no rainbow edge.
Given an $r$-uniform hypergraph $H = (X,E)$, the \emph{no-rainbow $r$-coloring problem}  asks whether $H$ has a no-rainbow $r$-coloring \cite{bodirsky2012complexity}. 


\section{Deterministic Local Search}

Our deterministic algorithm for the no-rainbow $r$-coloring  (Algorithm \ref{alg:generalnonRainbowlocalSearchFreeze}) consists of a procedure \DetNRC that conducts multiple searches for a no-rainbow coloring, starting from different colorings, by invoking a procedure \localSearch.  \DetNRC takes as its argument an $r$-uniform hypergraph $H = (X,E)$ and returns $1$ (true) if $H$ has a no-rainbow $r$-coloring, and 0 (false) otherwise\footnote{For simplicity, our algorithm only returns true or false.  The algorithm can be easily modified to return a no-rainbow coloring, if one exists.}.
To explain our algorithm in detail, we need to introduce some concepts.

\begin{algorithm}

\Fn(){\DetNRC{$H$}}{
\SetAlgoLined
\SetNoFillComment
\DontPrintSemicolon
\ForEach{initial candidate pair $(c,F)$}{
	\lIf{\localSearch{$H, (c,F),\frac{(r-1)n}{r}$}}{\Return $1$}
}
\Return $0$
}

\medskip

\Fn(){\localSearch{$H,(c,F),g$}}{
\SetAlgoLined
\SetNoFillComment
\DontPrintSemicolon
\lIf{$g = 0$ \KwAnd $c$ induces a rainbow edge in $H$}{
	\Return $0$}
\lIf{$c$ induces a rainbow edge $e$ such that $e \subseteq F$}{
	\Return $0$}
\lIf{$c$ is a no-rainbow coloring}{
	\Return $1$}
\lIf{$|e \cap F| \neq r-1$ for every $e \in E$}{
	\Return $1$}
Choose any rainbow hyperedge $e \in E$ such that $|e \cap F| = r-1$\;
Let $v$ be the node of $e$ such that $v \not\in F$ \;
\ForEach{$j \in [r] \setminus \{c(v)\}$}{
	Let $c'$ be the coloring of $H$ where $c'(v) = j$ and $c'(u) = c(u)$ for all $u \in X \setminus \{v\}$ \;
	\lIf{$\localSearch(H, (c', F \cup \{v\}),g-1)$}{\Return $1$}
}
\Return $0$
}
\caption{A deterministic algorithm for no-rainbow $r$-coloring.}
\label{alg:generalnonRainbowlocalSearchFreeze}
\end{algorithm}

A \emph{candidate pair} for $H = (X,E)$ is a pair $(c,F)$ where $c$ is an $r$-coloring of $H$ and  $F$ is a subset of $X$ such that for each $i \in [r]$ there is a node $v \in F$ such that $c(v) = i$ (thus, $c$ is surjective and $|F| \ge r$).  Once a node is added to set $F$, its color is not allowed to change. We refer to the nodes in $F$ as \emph{frozen} nodes.

Starting from an initial candidate pair, \localSearch performs a series of steps, where each step takes us from the current candidate pair $(c,F)$ to a new candidate pair $(c',F')$, where $c'$ is obtained by changing the color of some node $v \not\in F$ to one of the other $r-1$ colors and where $F' = F \cup \{v\}$.  A step may lead to a dead end --- a candidate pair $(c,F)$ where $c$ induces a rainbow edge $e$ and every node in $e$ is frozen.  In this case, we backtrack and try another candidate pair. 

To limit the running time of \localSearch, we put a bound on the  \emph{search radius} to be explored; that is, on the maximum Hamming distance allowed between the initial coloring and any coloring encountered during the search. 
To guarantee correctness, we must choose the starting candidate pairs and the search radius appropriately.

An \emph{initial candidate pair} for $H = (X,E)$ is a candidate pair $(c,F)$ where $|F| = r$ and every node in $X \setminus F$ has the same color (which, obviously, must be the same as that of one of the nodes in $F$.  Note that the number of initial candidate pairs is ${n \choose r} \cdot r$, which is polynomial for fixed $r$.

\begin{lemma}\label{lem:initial}
Let $H = (X,E)$ be an $r$-uniform hypergraph with $n$ nodes.  Suppose $H$ has a no-rainbow $r$-coloring $c^*$.  Then, there exists an initial candidate pair $(c,F)$ such that $c$ and $c^*$ agree on $F$ and $d(c, c^*) \leq \frac{(r-1)n}{r}$. 
\end{lemma}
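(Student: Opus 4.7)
The plan is to construct the desired initial candidate pair $(c,F)$ directly from $c^*$ by choosing $F$ to contain exactly one representative of each color class of $c^*$, and then coloring all nodes outside $F$ with the most popular color under $c^*$. This is the natural candidate, since it forces $(c,F)$ to satisfy the definition of an initial candidate pair while minimizing the number of nodes on which $c$ can disagree with $c^*$.

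First I would exploit surjectivity. Write $n_i = |\{v \in X : c^*(v) = i\}|$ for each $i \in [r]$. Because $c^*$ is surjective, $n_i \ge 1$ for every $i$, and $\sum_{i \in [r]} n_i = n$, so by pigeonhole there is some color $j \in [r]$ with $n_j \ge n/r$. For each $i \in [r]$, pick a single node $v_i \in X$ with $c^*(v_i) = i$, and set $F = \{v_1, \dots, v_r\}$, which has size exactly $r$ and contains one node of each color under $c^*$. Define the coloring $c$ by $c(v) = c^*(v)$ if $v \in F$ and $c(v) = j$ if $v \in X \setminus F$.

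Next I would verify that $(c,F)$ qualifies as an initial candidate pair in the sense defined before the lemma: $|F| = r$ by construction; $c$ is surjective because each $v_i \in F$ gets color $i$ under $c$; every node of $X \setminus F$ has the same color $j$; and for each $i \in [r]$ there is a node of $F$ of color $i$. By construction $c$ and $c^*$ agree on $F$.

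Finally I would bound $d(c,c^*)$. The two colorings agree on $F$, so disagreements occur only on $X \setminus F$, and within $X \setminus F$ a node $v$ is a disagreement exactly when $c^*(v) \ne j$. The number of such nodes is $|X \setminus F| - (n_j - 1) = (n - r) - (n_j - 1) = n - n_j - (r-1)$. Since $n_j \ge n/r$, this gives
\[
d(c,c^*) \;\le\; n - \tfrac{n}{r} - (r-1) \;=\; \tfrac{(r-1)n}{r} - (r-1) \;\le\; \tfrac{(r-1)n}{r},
\]
which is the claimed bound. There is no real obstacle here; the only subtlety is correctly subtracting the $n_j - 1$ nodes of color $j$ that lie outside $F$, and this is exactly where the pigeonhole choice of $j$ is needed to reach the $(r-1)n/r$ threshold.
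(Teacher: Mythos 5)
Your construction is exactly the one the paper uses: freeze one representative of each color class of $c^*$, and recolor everything else with a color that $c^*$ uses at least $n/r$ times. The argument is correct (and in fact slightly more careful than the paper's, since you account for the $n_j-1$ nodes of color $j$ outside $F$ and obtain the marginally stronger bound $d(c,c^*) \le \frac{(r-1)n}{r} - (r-1)$).
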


\begin{proof}
Let $(c,F)$ be the initial candidate pair constructed as follows.  Set $F$ consists of any $r$ nodes in $X$ that are assigned distinct colors by $c^*$.  For each $v \in F$, let $c(v) = c^*(v)$. For each $v \in X \setminus F$, let $c(v) = i$, where color $i \in [r]$ is chosen arbitrarily among the colors used at least $\frac{n}{r}$ times by $c^*$.  Then, $(c,F)$ is an initial candidate pair with $d(c,c^*) \leq (r-1)n/r$.
\end{proof}

The previous lemma puts a limit on the search radius that must be explored, provided we start with the right candidate pair.  In fact, during the search, we may find candidate pairs $(c,F)$ where $c$ induces a rainbow edge, but yet we can immediately conclude that a rainbow coloring exists.  The next lemma characterizes this situation.

\begin{lemma}\label{lem:freeze1}
Let $H = (X,E)$ be an $r$-uniform hypergraph.  Suppose that $(c,F)$ is a candidate pair for $H$ such that there does not exist a rainbow edge $e$ where $e \subseteq F$.  If $|e \cap F| \neq r-1$ for each edge $e \in E$, then $H$ admits a no-rainbow $r$-coloring that agrees with $c$ on $F$.
\end{lemma}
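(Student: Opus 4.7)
The plan is to construct an explicit no-rainbow $r$-coloring $c'$ of $H$ that agrees with $c$ on $F$. The most natural candidate is to simply extend $c$ by painting all unfrozen nodes with a single fixed color; the hypothesis that $|e \cap F| \neq r-1$ for every edge is exactly what makes this monochromatic extension safe.

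The key observation driving the argument is this: for every edge $e \in E$, since $|e \cap F| \neq r-1$, either $e \subseteq F$ (the case $|e \cap F| = r$) or $e$ contains at least two nodes outside of $F$ (the case $|e \cap F| \leq r-2$). In particular, no edge has exactly one unfrozen node whose color could single-handedly complete a rainbow.

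Concretely, I would define $c'$ by $c'(v) = c(v)$ for $v \in F$ and $c'(v) = 1$ for $v \in X \setminus F$ (any fixed color works). Then I verify two properties. First, $c'$ is surjective, because $(c,F)$ being a candidate pair means $F$ already contains a node of every color, and these colors are preserved in $c'$. Second, $c'$ induces no rainbow edge: for each $e \in E$, I do the case split. If $e \subseteq F$, then $c'$ restricted to $e$ coincides with $c$ restricted to $e$, which by hypothesis is not rainbow. If $|e \cap F| \leq r-2$, then at least two nodes of $e$ lie outside $F$ and both receive color $1$, so $e$ displays at most $|e \cap F| + 1 \leq r-1$ distinct colors and cannot be rainbow.

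I do not expect any real obstacle here: the lemma is essentially a bookkeeping statement, and the only subtlety is recognizing that the forbidden value $r-1$ is precisely the critical intersection size at which a single unfrozen node could turn an otherwise near-rainbow edge into a rainbow. Once that is identified, the uniform extension works immediately and the case analysis is routine.
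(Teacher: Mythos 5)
Your construction is exactly the paper's: extend $c$ by giving every node of $X \setminus F$ a single fixed color, then observe that surjectivity comes from the candidate-pair condition, edges inside $F$ are non-rainbow by hypothesis, and edges with $|e \cap F| \le r-2$ repeat the fill color on two nodes. The argument is correct and matches the paper's proof step for step.
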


\begin{proof}
Let $c^*$ be the $r$-coloring of $H$ where $c^*(v) = c(v)$ for all $v \in F$ and $c^*(v) = 1$ for all $v \in X \setminus F$. Then $c^*$ is a no-rainbow $r$-coloring of $H$ because (a) $c^*$ is surjective, (b) no edge $e \in E$ such that $|e\cap F| = r$ is a rainbow edge, and (c) for each $e \in E$ such that $|e\cap F| \leq r-2$, $c$ assigns the same color to at least two nodes in $e$.
\end{proof}

Let $(c,F)$ be a candidate pair for a hypergraph $H = (X,E)$.
We say that $(c,F)$ is \emph{within $g$ of a no-rainbow coloring}, for some integer $g \ge 0$, if there exists a candidate pair $(c',F')$ where $F \subseteq F'$ such that $c$ and $c'$ agree on $F$, $d(c,c^*) \le g$, and either $c'$ is a no-rainbow $r$-coloring of $H$ or $|e \cap F'| \neq r-1$ for each edge $e \in E$.


\begin{lemma}\label{lem:localSearchAndFreeze}
Let $H$ be an $r$-uniform hypergraph, $(c,F)$ be a candidate pair for $H$ and $g$ be a positive integer.  Then, \localSearch{$H, (c,F),g$} returns $1$ if $(c,F)$ is within $g$ of a no-rainbow coloring, and $0$ otherwise.  The running time of \localSearch is $O^*((r-1)^g)$.
\end{lemma}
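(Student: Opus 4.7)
The plan is to prove correctness and the running time bound simultaneously, by strong induction on $g$. Write $(c^*, F^*)$ for a witness in the definition of ``within $g$'': a candidate pair with $F \subseteq F^*$, $c^*|_F = c|_F$, $d(c,c^*) \le g$, and either $c^*$ a no-rainbow coloring or $|e \cap F^*| \neq r-1$ for every $e \in E$; without loss of generality one may further assume no rainbow edge lies entirely within $F^*$, since otherwise the witness would certify no no-rainbow coloring. For the base case $g = 0$, the constraint $d(c,c^*) \le 0$ forces $c^* = c$ everywhere, and a direct case analysis combined with Lemma \ref{lem:freeze1} shows that $(c, F)$ is within $0$ of a no-rainbow coloring iff one of the two non-recursive return-$1$ conditions of \localSearch holds; the only remaining situation is $c$ inducing a rainbow edge, which is caught by the opening $g=0$ check with return $0$.

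For the inductive step, assume the lemma for $g-1$. The forward direction (\emph{algorithm returns $1$} $\Rightarrow$ \emph{within $g$}) is direct: the two non-recursive return-$1$ lines produce a witness via Lemma \ref{lem:freeze1}, and a return-$1$ obtained from a recursive call on $(c', F \cup \{v\})$ with parameter $g-1$ lifts, by the inductive hypothesis, to a witness for $(c,F)$ whose Hamming distance is at most one larger, hence still at most $g$.

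The converse direction (\emph{within $g$} $\Rightarrow$ \emph{algorithm returns $1$}) is the main obstacle. Given a witness $(c^*, F^*)$, the algorithm cannot return $0$ at the ``rainbow edge $e \subseteq F$'' check: such an edge would be rainbow under $c^*$ (by agreement on $F$) and contained in $F^*$, violating the standing assumption on the witness. If some return-$1$ line fires we are done. Otherwise \localSearch selects a rainbow edge $e$ with $|e \cap F| = r-1$ and its unique unfrozen node $v$; under $c$, and hence under $c^*$, the $r-1$ frozen nodes of $e$ carry the distinct colors $[r] \setminus \{c(v)\}$. The key observation is that $c^*(v) \neq c(v)$: were they equal, $e$ would remain rainbow under $c^*$, and upon enlarging $F^*$ to include $v$ (which preserves every witness property since $c^*$ already matches $c$ at $v$) the edge $e$ would be a rainbow edge entirely inside the frozen set, contradicting the witness. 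Therefore $c^*(v) \in [r] \setminus \{c(v)\}$, and the branch with $j = c^*(v)$ produces a candidate pair $(c', F \cup \{v\})$ still witnessed by $(c^*, F^*)$ at parameter $g-1$; the inductive hypothesis then guarantees that the recursive call returns $1$.

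The running time analysis is immediate once termination is established: each invocation performs polynomial work outside recursion and spawns at most $r-1$ children with the depth parameter decreased by one, yielding the recurrence $T(g) \le (r-1)\, T(g-1) + \mathrm{poly}(n)$ with $T(0) = \mathrm{poly}(n)$, which resolves to $O^*((r-1)^g)$.
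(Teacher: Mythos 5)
Your proof is correct and follows essentially the same route as the paper's: it tracks the recursive structure of the algorithm (induction on $g$), uses Lemma~\ref{lem:freeze1} to justify the non-recursive return-$1$ lines, and bounds the running time by the same depth-$g$, branching-factor-$(r-1)$ recursion-tree count. The only substantive difference is that you make explicit the step the paper leaves vague --- where the paper merely asserts that recoloring $v$ ``to one of these alternative colors'' yields a pair within $g-1$, you identify the successful branch as $j = c^*(v)$ and argue $c^*(v) \neq c(v)$ --- which is a welcome tightening rather than a different approach.
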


\begin{proof}
The first four \textbf{if} statements of \localSearch successively consider the base cases.  If $g = 0$ and $c$ induces a rainbow edge, no further color changes are allowed, so the answer must be $0$.  If $c$ induces a rainbow edge $e$ such that $e \subseteq F$, then we are not allowed to change any of the colors in $e$, so the answer must again be $0$.  If $c$ is a no-rainbow coloring, then we must obviously return $1$.
If $|e \cap F| \neq r-1$ for every edge $e$, then, by Lemma \ref{lem:freeze1}, $H$ has a no-rainbow $r$-coloring, so \localSearch returns $1$.  

For the recursion, suppose  $(c,F)$ is within $g$ of a no-rainbow coloring and that $e$ is any rainbow edge such that $|e \cap F| = r-1$.  Let $v$ be the single unfrozen node in $e - F$.  Then, there are only $r-1$ possible colors for $v$. By setting $v$ to one of these alternative colors and adding $v$ to $F$, we obtain a candidate pair that is within $g-1$ of a no-rainbow coloring.

The running time of \localSearch is $O^*((r-1)^g)$, since the depth of its recursion tree is $g$, each node has $0$ or $r-1$ children, and the work per node is polynomial.
\end{proof}

\begin{theorem}
Given an $n$-node $r$-uniform hypergraph $H$, algorithm \DetNRC returns $1$ if $H$ admits a no-rainbow, and $0$ otherwise.  The worst-case running time of \DetNRC is $O^*((r-1)^{(r-1)n/r})$.  In particular, the worst-case running time is $O^*(1.59^n)$ for $r=3$ and $O^*(2.28^n)$ for $r=4$.
\end{theorem}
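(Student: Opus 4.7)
The plan is to assemble the theorem directly from Lemmas \ref{lem:initial} and \ref{lem:localSearchAndFreeze}, with essentially no new combinatorial content: Lemma \ref{lem:initial} certifies that \emph{some} initial candidate pair lies within Hamming radius $(r-1)n/r$ of a no-rainbow coloring, and Lemma \ref{lem:localSearchAndFreeze} certifies that \localSearch correctly detects this whenever it holds. So the argument is a straightforward case analysis on whether $H$ admits a no-rainbow coloring.

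For correctness, I would argue both directions. Suppose first that $H$ admits a no-rainbow $r$-coloring $c^*$. By Lemma \ref{lem:initial}, there exists an initial candidate pair $(c,F)$ with $d(c,c^*) \le (r-1)n/r$ on which $c$ and $c^*$ agree. Since \DetNRC enumerates every initial candidate pair, it will consider this $(c,F)$; because $(c,F)$ is within $(r-1)n/r$ of a no-rainbow coloring (take $F' = X$ and $c' = c^*$ in the definition), Lemma \ref{lem:localSearchAndFreeze} guarantees that the corresponding call \localSearch{$H,(c,F),(r-1)n/r$} returns $1$, so \DetNRC returns $1$. Conversely, if \DetNRC returns $1$, then some call \localSearch{$H,(c,F),(r-1)n/r$} returned $1$, and Lemma \ref{lem:localSearchAndFreeze} then yields a pair $(c',F')$ extending $(c,F)$ in which either $c'$ is a no-rainbow coloring or $|e\cap F'|\ne r-1$ for every edge, the latter case producing a no-rainbow coloring via Lemma \ref{lem:freeze1}.

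For the running time, I would bound the number of initial candidate pairs by $\binom{n}{r}\cdot r$, which is polynomial in $n$ for fixed $r$ and hence absorbed by the $O^*$ notation. Each call of \localSearch runs in time $O^*((r-1)^{(r-1)n/r})$ by Lemma \ref{lem:localSearchAndFreeze}, so the total time is $O^*((r-1)^{(r-1)n/r})$. Finally I would substitute the specific values: for $r=3$ the base is $2^{2n/3}=(2^{2/3})^n < 1.59^n$, and for $r=4$ it is $3^{3n/4}=(3^{3/4})^n < 2.28^n$, which supplies the stated bounds.

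I do not anticipate any genuine obstacle: Lemmas \ref{lem:initial} and \ref{lem:localSearchAndFreeze} already do the real work, and the only thing to be careful about is matching definitions---in particular, making sure that the no-rainbow coloring $c^*$ supplied by Lemma \ref{lem:initial} genuinely witnesses the ``within $g$'' condition in Lemma \ref{lem:localSearchAndFreeze} when we pick $F'=X$, so that the hypothesis needed for \localSearch to succeed is truly met. The only mildly sharp step is checking that the surjectivity requirement in the candidate pair propagates through the recursion, but this is immediate because $F$ only grows and already contains one node of each color at the start.
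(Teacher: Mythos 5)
Your proposal is correct and follows essentially the same route as the paper: correctness is assembled from Lemma \ref{lem:initial} (some initial candidate pair is within radius $(r-1)n/r$ of a no-rainbow coloring) together with Lemma \ref{lem:localSearchAndFreeze}, and the running time comes from multiplying the polynomial number of initial candidate pairs by the $O^*((r-1)^{(r-1)n/r})$ cost per call. You are somewhat more explicit than the paper about the soundness direction (that a return value of $1$ genuinely certifies a no-rainbow coloring via Lemma \ref{lem:freeze1}) and about verifying that $(c',F')=(c^*,X)$ witnesses the ``within $g$'' condition, but these are refinements of the same argument rather than a different one.
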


\begin{proof}
Suppose $H$ has a no-rainbow $r$-coloring $c^*$.  Then, by Lemma \ref{lem:initial}, one of the initial candidate pairs is within $(r-1)n/r$ of $c^*$.  Since \localSearch tries all possible initial candidates, \DetNRC is correct.

By Lemma \ref{lem:localSearchAndFreeze}, since \localSearch is invoked with search radius $g = \frac{(r-1) n}{n}$, each call to \localSearch $O^*((r-1)^{(r-1)n/r})$ time.  Since \DetNRC calls \localSearch a polynomial number of times, the claimed running time follows.
\end{proof}

\section{Randomized Algorithm\label{sec:lSFreezeRand}}

Our randomized algorithm for the no-rainbow $r$-coloring problem (Algorithm \ref{alg:nonRainbowRandFreeze}) consists of a procedure \RandNRC whose outer \textbf{for} loop conducts a series of trials. Each trial invokes a randomized procedure \randLocalSearch to search for a no-rainbow $r$-coloring starting from different candidate pairs.  
We will prove that, given an $n$-node $r$-uniform hypergraph $H$ and a number $\alpha > 1$, \RandNRC{$H,\alpha$} finds a no-rainbow $r$-coloring of $H$, if such a coloring exists, with probability at least $1 - \frac{1}{e^\alpha}$.  

\begin{algorithm}

\Fn(){\RandNRC{$H,\alpha$}}{
\SetAlgoLined
\SetNoFillComment
\DontPrintSemicolon
\For{$j=1$ \KwTo $\alpha \cdot (\frac{r}{2})^n$}{
	\ForEach{$F \subseteq X$ such that $|F| = r$}{
		Let $c$ be the coloring obtained by assigning distinct colors to the nodes of $F$  and assigning each node in $X \setminus F$ a color chosen uniformly at random from $[r]$ \;
		\lIf{\randLocalSearch{$H,c,F$}}{\Return $1$}
	}
}
\Return $0$
}

\medskip

\Fn(){\randLocalSearch{$H,c,F$}}{\SetAlgoLined
\SetNoFillComment
\DontPrintSemicolon
\For{$i =1$ \KwTo $n-r$}{
	\lIf{$c$ induces a rainbow edge such that $e \subseteq F$}
		{\Return $0$}	
	\lIf{$c$ is a no-rainbow coloring}
		{\Return $1$}
	\lIf{$|e \cap F| \neq r-1$ for every edge $e$}{
		\Return $1$}

	Pick an arbitrary rainbow hyperedge $e$ such that $|e\cap F| = r-1$\;
	Let $v$ be the node of $e$ such that $v \not\in F$ \;
	Pick a color $j \in [r] \setminus \{c(v)\}$ uniformly at random \;
	$c(v) = j$ \;
	$F = F \cup \{v\}$
}
\Return $0$
}
\caption{A randomized algorithm for no-rainbow $r$-coloring.}
\label{alg:nonRainbowRandFreeze}
\end{algorithm}



We begin by bounding the probability that \randLocalSearch finds a no-rainbow $r$-coloring, provided we start it at the right point.

\begin{lemma}\label{lem:localSearch}
Let $H = (X,E)$ be an $n$-node $r$-uniform hypergraph. Suppose $(c,F)$ is a candidate pair such that $|F| = 4$ and for each node $v \in X \setminus F$, $c(v)$ is chosen at random. If $H$ has a no-rainbow $r$-coloring that agrees with $c$ on $F$, then, with probability at least $\left (\frac{2}{r} \right )^n$, \randLocalSearch{$H,c,F$} will find a no-rainbow $r$-coloring for $H$. 
\end{lemma}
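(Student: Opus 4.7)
The plan is to adapt Sch\"oning-style random-walk analysis (originally developed for $k$-SAT) to the present setting. Fix any no-rainbow $r$-coloring $c^*$ of $H$ that agrees with $c$ on $F$; the goal is to bound from below the probability that \randLocalSearch{$H,c,F$} reaches a no-rainbow coloring.

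The key structural observation is the following. Suppose at the start of some iteration every frozen node $u$ satisfies $c(u)=c^*(u)$ (call this the \emph{correctness invariant}), and the algorithm picks a rainbow hyperedge $e$ with $|e\cap F|=r-1$. Then the unique unfrozen vertex $v\in e$ must satisfy $c(v)\neq c^*(v)$: otherwise the multiset of colors on $e$ under $c$ would coincide with that under $c^*$, making $e$ rainbow under $c^*$ and contradicting that $c^*$ is a no-rainbow coloring. Consequently the uniformly chosen $j\in[r]\setminus\{c(v)\}$ equals $c^*(v)$ with probability exactly $1/(r-1)$, and in that ``correct'' event the invariant is preserved and the number of incorrectly colored unfrozen vertices strictly decreases by one.

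Let $D$ denote the (random) number of $v\in X\setminus F$ with $c(v)\neq c^*(v)$ in the initial coloring. Chaining the observation above, conditional on $D$, with probability at least $(1/(r-1))^{D}$ the first $D$ iterations all make the correct choice; after these iterations $c$ agrees with $c^*$ on every vertex of $X$, so $c$ is a no-rainbow coloring of $H$ and the next iteration's no-rainbow test returns $1$ (the test based on Lemma~\ref{lem:freeze1} may fire earlier). Since each $v\in X\setminus F$ is independently ``wrong'' with probability $(r-1)/r$, the variable $D$ is $\mathrm{Binomial}(n-r,(r-1)/r)$-distributed, and averaging yields
\begin{align*}
\Pr[\text{success}]\;&\ge\;\sum_{D=0}^{n-r}\binom{n-r}{D}\Bigl(\tfrac{r-1}{r}\Bigr)^{D}\Bigl(\tfrac{1}{r}\Bigr)^{n-r-D}\Bigl(\tfrac{1}{r-1}\Bigr)^{D}\\
&=\;\Bigl(\tfrac{1}{r}\Bigr)^{n-r}\sum_{D=0}^{n-r}\binom{n-r}{D}\;=\;\Bigl(\tfrac{2}{r}\Bigr)^{n-r}\;\ge\;\Bigl(\tfrac{2}{r}\Bigr)^{n},
\end{align*}
where the last inequality uses $(r/2)^{r}\ge 1$ for $r\ge 2$.

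The main obstacle I expect is the termination bookkeeping in the third paragraph: one must verify that $D$ consecutive correct guesses truly trigger one of \randLocalSearch's success tests within the $n-r$ iteration budget of the \textbf{for} loop. The boundary case $D=n-r$ is the most delicate, since success there would be recognized only by the hypothetical $(n-r+1)$-st loop test; either a final no-rainbow check can be appended after the loop, or this single term must be bounded separately and shown to be exponentially smaller than $(2/r)^n$ for $r\ge 3$.
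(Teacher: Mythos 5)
Your proof is correct and follows essentially the same route as the paper's: a Sch\"oning-style random walk toward a fixed target coloring $c^*$, with each step decreasing the Hamming distance with probability $1/(r-1)$, averaged over the binomially distributed initial distance to give $(2/r)^{n-r}\ge(2/r)^n$. In fact your write-up is somewhat more careful than the paper's in two places it leaves implicit: the structural observation that the unique unfrozen vertex of the chosen rainbow edge must disagree with $c^*$ (which is exactly what justifies the $1/(r-1)$ per-step probability), and the boundary case $D=n-r$, where the $n-r$ iteration budget means the final success test is never reached and that single term must either be dropped from the sum or handled by a post-loop check.
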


\begin{proof}
Suppose $H$ has a no-rainbow $r$-coloring that agrees with $c$ on $F$.  \randLocalSearch starts from $(c,F)$ and at each step determines that either (1) it has reached a dead end, because there exists a rainbow edge all whose nodes are frozen, (2) $c$ is a no-rainbow $r$-coloring, (3) $(c,F)$ satisfies the conditions of Lemma \ref{lem:freeze1} and, therefore $H$ admits a no-rainbow $r$-coloring or (4) there is a rainbow edge,  say $e$, with exactly one unfrozen node.  In case (1), we say that \randLocalSearch \emph{fails}, while in cases (2) and (3) it \emph{succeeds}.  In case (4)  \randLocalSearch, recolors the unfrozen node, say $v$, in $e$ with a randomly chosen color, adds $v$ to $F$ and continues the search. 

The probability that \randLocalSearch succeeds is the probability that it reaches cases (2) or (3). Let $(c^*,F^*)$ be a candidate pair such that $c^*$ agrees with $c$ on $F$ and $F \subseteq F^*$ such that either $c^*$ is a rainbow coloring or $(c^*,F^*)$ satisfies the conditions of Lemma \ref{lem:freeze1}.  By assumption, $(c^*, F^*)$ must exist.  If at some iteration of \randLocalSearch we have $(c,F) = (c^*, F^*)$, then \randLocalSearch succeeds.  

Let $A_k$ be the event that the initial assignment $c$ disagrees with $c^*$ on exactly $k$ vertices. By assumption, $c$ agrees with $c^*$ on at least $r$ nodes. Thus, letting $m = n-r$, we have 
$$\Pr[A_k] = {m \choose k}(r-1)^k r^{-m}.$$
The probability $p$ that \randLocalSearch succeeds is at least the probability that each iteration reduces the distance between $c$ and $c^*$ by 1, until $c = c^*$. The probability of getting closer to $c^*$ at each step is $\frac{1}{r-1}$.
Thus, we have
\begin{equation*}
    p \geq \sum_{k=0}^m {m\choose k}(r-1)^k r^{-m}\left(\frac{1}{r-1}\right)^k 
       = r^{-m} \sum_{k=0}^m {m\choose k}
       =  \left(\frac{2}{r}\right)^m
       \geq \left(\frac{2}{r}\right)^n,
\end{equation*}
as claimed.
\end{proof}

\begin{theorem}
Given an $r$-uniform hypergraph $H = (X,E)$ and a number $\alpha > 1$, \RandNRC finds a no-rainbow $r$-coloring of $H$, if one exists, with probability at least $1- \frac{1}{e^\alpha}$.  The worst-case running time of  \RandNRC is $O^*((\frac{r}{2})^n)$.
In particular, for $r = 3$ and $r =4$, the worst-case running times are $O^*(1.5^n)$ and $O^*(2^n)$, respectively. 
\end{theorem}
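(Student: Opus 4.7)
The plan is to combine Lemma \ref{lem:localSearch} with a standard amplification argument, splitting the proof into two parts: bounding the success probability of a single outer iteration, and amplifying it via repetition.

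For correctness, suppose $H$ admits a no-rainbow $r$-coloring $c^*$. Because $c^*$ is surjective, there exists a set $F^* \subseteq X$ with $|F^*|=r$ on which $c^*$ uses all $r$ colors. In the inner \textbf{foreach} loop, when $F = F^*$ is chosen with the one distinct-color assignment to $F^*$ that matches $c^*$ on $F^*$ (which is one of at most $r!$ possibilities, enumerated implicitly or explicitly; this contributes only a polynomial factor since $r$ is fixed), the resulting pair $(c, F^*)$ satisfies the hypothesis of Lemma \ref{lem:localSearch}: $c$ agrees with the no-rainbow coloring $c^*$ on $F^*$, and each node of $X \setminus F^*$ is colored uniformly at random. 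Thus \randLocalSearch{$H,c,F^*$} returns $1$ with probability at least $p := (2/r)^n$, so the success probability of a single outer iteration is at least $p$.

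For amplification, the outer \textbf{for} loop runs $T = \alpha \cdot (r/2)^n = \alpha/p$ times, and distinct outer iterations make independent random choices. The probability that every iteration fails is at most $(1-p)^{\alpha/p}$, and using the standard inequality $(1-p)^{1/p} \le 1/e$ (valid for $0 < p < 1$), this is bounded by $e^{-\alpha}$. The overall success probability is therefore at least $1 - e^{-\alpha}$, as claimed.

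For the running time, each call to \randLocalSearch performs at most $n - r$ iterations, with polynomial work per iteration (identifying rainbow edges, choosing a random color, updating $F$), so each call runs in $O^*(1)$ time. The inner \textbf{foreach} loop ranges over at most $\binom{n}{r} \cdot r!$ initial candidates, which is polynomial in $n$ for fixed $r$. The outer loop contributes the factor $\alpha \cdot (r/2)^n = O^*((r/2)^n)$. Multiplying and absorbing polynomial and constant-in-$r$ factors into $O^*$ gives the $O^*((r/2)^n)$ bound; substituting $r = 3$ and $r = 4$ yields $O^*(1.5^n)$ and $O^*(2^n)$, respectively. The one mildly subtle point is making the enumeration argument precise enough to guarantee that some initial candidate pair truly satisfies the hypothesis of Lemma \ref{lem:localSearch}; everything else is a routine application of that lemma together with the $(1-p)^{1/p} \le 1/e$ amplification inequality.
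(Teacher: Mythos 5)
Your proposal is correct and follows essentially the same route as the paper: apply Lemma~\ref{lem:localSearch} to lower-bound a single trial's success probability by $(2/r)^n$, amplify over $\alpha\cdot(r/2)^n$ independent trials via $(1-p)^{1/p}\le 1/e$, and observe that \randLocalSearch and the inner loop contribute only polynomial factors. The one subtlety you flag --- matching the distinct-color assignment on $F$ to $c^*$ --- can be dispatched without any extra enumeration by noting that permuting the colors of $c^*$ yields another no-rainbow coloring, so whatever assignment the algorithm picks agrees with some such permutation on $F$.
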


\begin{proof}
If $H$ has a no-rainbow $r$-coloring, there must exist a set $F \subseteq X$ with $|F| = r$ such that the nodes in $F$ have $r$ different colors. Thus, by Lemma \ref{lem:localSearch}, at least one of the calls to \randLocalSearch has probability at least $\left (\frac{2}{r} \right )^n$ of success.  It follows from standard results (see, e.g., \cite[p.\ 151]{SchoeningToran2013}) that by conducting $\alpha \cdot (\frac{r}{2})^n$ trials of \RandNRC's outer \textbf{for} loop, the probability of success is at least $1- \frac{1}{e^\alpha}$.  The time bound follows by noting that  the running time of  \randLocalSearch is polynomial and, for each $j$, the number of executions of \RandNRC's inner \textbf{for} loop is polynomial for fixed $r$.
\end{proof}

\bibliographystyle{plain}
\bibliography{../Bibliographies/bibquartets,../Bibliographies/mybib,../Bibliographies/phylogenies}

\end{document}